\def\ps@headings{%
\def\@oddhead{\mbox{}\scriptsize\rightmark \hfil \thepage}%
\def\@evenhead{\scriptsize\thepage \hfil \leftmark\mbox{}}%
\def\@oddfoot{}%
\def\@evenfoot{}}
\newtheorem{theorem}{Theorem}
\newtheorem{lemma}{Lemma}
\newtheorem{observation}{Observation}
\theoremstyle{example}
\author
{
  \IEEEauthorblockN
  {
    Sukanta Bhattacharjee\IEEEauthorrefmark{1},  Ansuman Banerjee\IEEEauthorrefmark{1}, Tsung-Yi Ho\IEEEauthorrefmark{2}, Krishnendu Chakrabarty\IEEEauthorrefmark{3}, and Bhargab B. Bhattacharya\IEEEauthorrefmark{1}\\ 
  }
  \IEEEauthorblockA
  {\footnotesize
    \IEEEauthorrefmark{1}Nanotechnology Research Triangle, Indian Statistical Institute, Kolkata 700108, India, E-mail: 
    \{sukanta\_r, ansuman, bhargab\}@isical.ac.in
  }
  \IEEEauthorblockA
  {\footnotesize
    \IEEEauthorrefmark{2}Dept. CS \& Inform. Engg., National Cheng Kung University, Tainan, Taiwan  70101, E-mail: tyho@csie.ncku.edu.tw
  }
  \IEEEauthorblockA
  {\footnotesize
    \IEEEauthorrefmark{3}Dept. ECE, Duke University, Durham, NC 27708, USA, E-mail: krish@ee.duke.edu
  }
}
\title{\huge Algorithms for Producing Linear Dilution Gradient with Digital Microfluidics}
\date{}
\begin{document}

\maketitle

\begin{abstract}
Digital microfluidic (DMF) biochips  are now being extensively used to automate several biochemical laboratory protocols such as clinical analysis, point-of-care diagnostics, and polymerase chain reaction (PCR). In many biological assays, e.g., in bacterial susceptibility tests, samples and reagents are required in multiple concentration (or dilution) factors, satisfying certain ``gradient" patterns such as linear, exponential, or parabolic. Dilution gradients are usually prepared with continuous-flow microfluidic devices; however, they suffer from inflexibility, non-programmability, and from large requirement of costly stock solutions. DMF biochips, on the other hand, are shown to produce, more efficiently, a set of random dilution factors. However, all existing algorithms fail to optimize the cost or performance when a certain gradient pattern is required. In this work, we present an algorithm to generate any arbitrary linear gradient, on-chip, with minimum wastage, while satisfying a required accuracy in the concentration factor. We present new theoretical results on the number of \textit{mix-split} operations and \textit{waste} computation, and prove an upper bound on the storage requirement. The corresponding layout design of the biochip is also proposed. Simulation results on different linear gradients show a significant improvement in sample cost over three earlier algorithms used for the generation of multiple concentrations. 
 
\end{abstract}

\section{Introduction}
Recent advances in digital microfluidic (DMF) biochips have enabled realizations of a variety of laboratory assays on a tiny chip for automatic and reliable analysis of biochemical samples. A DMF biochip consists of a patterned 2D array or a customized layout of electrodes, typically a few square centimeters in size \cite{b16,b32}.  The device can manipulate pico- or femtoliter-sized discrete droplets for the purpose of conducting various fluidic operations under electrical actuations. Typical fluidic operations on a droplet include dispensing, transport, mixing, splitting, heating, incubation, and sensing \cite{b2,b19,b20,b21}. DMF biochips offer significant flexibility and programmability over their continuous-flow counterparts while implementing various assays that mandate high-sensitivity, and low requirement of sample and reagent consumption. One such example is sample preparation, which plays a pivotal role in biochemical laboratory protocols, e.g., in polymerase chain reaction (PCR) \cite{JB2008}, and in other applications in biomedical engineering and life sciences \cite{b27,b28,b29,b30,b31}.	An important step in sample preparation is dilution, where the objective is to prepare a fluid with a desired concentration (or dilution) factor. There are two performance metrics in sample preparation: the number of \textit{mix-split} operations to achieve a concentration factor with a specified accuracy, and the overall reactant usage (equivalently, waste production). The first parameter determines the sample preparation time, whereas the latter is related to the cost of stock solution. An efficient sample preparation algorithm should target to minimize either one or  both of them as far as possible.

In sample preparation, producing chemical and biomolecular concentration gradients is of particular interest. Dilution gradients play essential roles in in-vitro analysis of many biochemical phenomena including growth of pathogens and selection of drug concentration. For example, in drug design, it is important to determine the minimum amount of an antibiotic that inhibits the visible growth of bacteria isolate (defined as minimum inhibitory concentration (MIC)). The drug with the least concentration factor (i.e., with highest dilution) that is capable of arresting the growth of bacteria, is considered as MIC. During the past decade, a variety of automated bacterial identification and antimicrobial susceptibility test systems have been developed, which provide results in only few hours rather than days, compared to traditional overnight procedures \cite{b28}. Typical automated susceptibility methods use an exponential dilution gradient (e.g., $1\%, 2\%, 4\%, 8\%, 16\%$) in which concentration factors ($\mathcal{CF}$) of the given sample are in geometric progression \cite{b27}. Linear dilution gradient (e.g., $15\%, 20\%, 25\%, 30\%, 35\%$), in which the concentration factors of the sample appear in arithmetic progression, offers more sensitive tests. Linear gradients are usually prepared by using continuous-flow microfluidic ladder networks \cite{WJW2010}, or by other networks of microchannels \cite{DCJ2001}, \cite{Jang2011}. Since the fluidic microchannels are hardwired, continuous-flow based diluters are designed to cater to only a pre-defined gradient, and thus they suffer from inflexibility and non-programmability. Also, these methods require a  significant amount of costly stock solutions. In contrast, on a DMF biochip platform, a set of random dilution factors can be easily prepared. However, existing algorithms \cite{TsungTCAD12,b22,REMIA} fail to optimize the cost or performance when a certain gradient pattern is required. 

In digital microfluidics, two types of dilution methods are used: serial dilution and interpolated dilution \cite{RSF2003}. A serial dilution consists of a sequence of simple dilution steps to reduce the concentration of a sample. The source of the dilution sample for each step comes from the diluted sample of the previous step. A typical serial procedure generates an \textit{exponential dilution} profile, in which, a unit volume sample/reagent droplet is mixed with a unit-volume buffer $(0\%)$ droplet to obtain two unit-volume droplets of half the concentration. If a $100\%$ sample/reagent is recursively diluted by a buffer solution, then the $\mathcal{CF}$  of the sample/reagent becomes $\frac{1}{2^d}$ after $d$ steps of mixing and balanced splitting. In each \textit{interpolated dilution} step, two unit-volume droplets with $\mathcal{CF}$  $C_1$ and $C_2$ are mixed to obtain two droplets with $\mathcal{CF}$  $\frac{C_1+C_2}{2}$. Both the dilution methods produce concentration factors whose denominators are integral power of two. Thus, in the $(1:1)$ \textit{mix-split} model, the $\mathcal{CF}$  values are approximated  (rounded off) as an $n$-bit binary fraction, i.e., as $\frac{x}{2^n}$, where $x \in \mathbb{Z}^+$, $0 \leq x \leq 2^n$; $n \in \mathbb{Z}^+$, determines the required accuracy (maximum error in $\mathcal{CF}\leq \frac{1}{2^{n+1}}$) of the target concentration factor.

In this work, we present for the first time, an algorithm to generate any arbitrary linear gradient, on-chip, with minimum wastage, while satisfying a required accuracy in concentration factors. Our algorithm utilizes the underlying combinatorial properties of a linear gradient in order to generate the target set. The corresponding layout design of the biochip is also proposed. We prove theoretical results on maximum storage requirement in the layout. Simulation results on different linear gradients show a significant improvement in sample cost over three algorithms \cite{TsungTCAD12,b22,REMIA} that were used earlier for the generation of multiple concentration factors. 

\section{Background and prior art}

In early days, dilutions were obtained by manually measuring and dispensing solutions with a pipette. With the advent of continuous-flow microfluidic (CMF) biochips, dilution gradients were prepared based on diffusive mixing of two or more streams. The degree of diffusion can be regulated by the flow rate or by channel dimensions. Designs of such gradient generators on a CMF biochip were proposed by Walker et al. \cite{b24} and O'Neill et al. \cite{b25}. The flow rates were adjusted by controlling the channel length, which is proportional to fluidic resistance in each channel. Serial dilution CMF biochips for monotonic and arbitrary gradient were also reported by Lee et al. \cite{b7}. Dertinger et al. have shown how a complex CMF network of microchannels designed for diffusive mixing can be used to generate linear, parabolic, or periodic dilution gradients \cite{DCJ2001}. Recently, design of a 2D combinatorial dilution gradient generator has been reported based on a tree-type microchannel structure and an active injection system \cite{Jang2011}.

Although continuous-flow devices are found to be adequate for many  biochemical applications, they are less suitable for tasks requiring a high degree of flexibility or programmable fluid manipulations. These closed-channel systems are inherently difficult to integrate or scale because the parameters that govern fluid flow depend on the properties of the entire system. Thus, permanently etched microstructures suffer from limited reconfigurability and poor fault tolerance.

\begin{figure}[!h]
\centering
\includegraphics[scale=0.6]{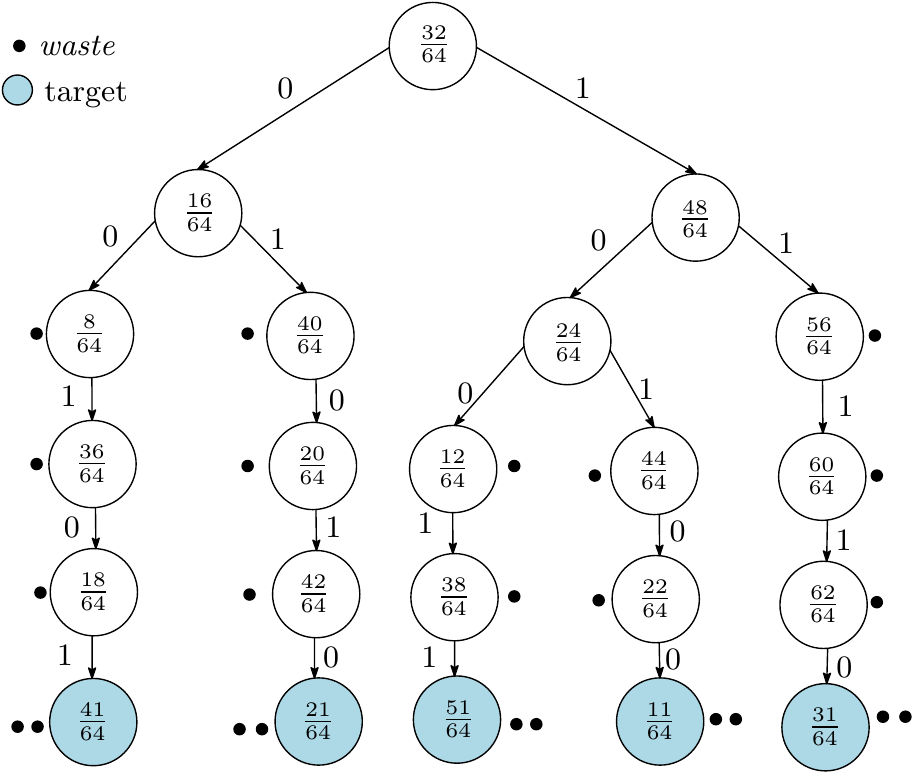}
\caption{ Dilution tree as generated in \cite{TsungTCAD12} }
\label{fig:tsung}
\end{figure}

A DMF biochip typically manipulates discrete fluid droplets on a uniform $2$D array of identical electrodes. Thus the volume of a merged droplet is usually an integral multiple of that of a single droplet (unit volume). It is a challenge to achieve a desired concentration factor $(\mathcal{CF})$ using the fewest number of \textit{mix-split} steps with minimum number of \textit{waste} droplets. A single-target mixing algorithm based on bit-scanning (\textit{BS}) method was proposed by Thies et al. \cite{BS}  considering the $(1:1)$ mixing model. In the special case of diluting a sample, the \textit{BS} method first represents the target $\mathcal{CF}$ as an $n$-bit binary string depending on the required accuracy of $\mathcal{CF}$; it then scans the bits from right-to-left to decide on the sequence of \textit{mix-split} steps, i.e., whether the sample or the buffer droplet is to be mixed with the most recently produced droplet. As an example, any path from the root to a leaf node in Fig. \ref{fig:tsung} represents an execution sequence of the \textit{BS} method. However, it produces one \textit{waste} droplet at each \textit{mix-split} step except the last one. In order to achieve a target $\mathcal{CF}$  with a maximum error of $\frac{1}{2^{n+1}}$, the  dilution process is to be repeated through at most $n$ \textit{mix-split} steps. Thus, depending on the required accuracy level of the target concentration, the value of $n$ is chosen. The \textit{DMRW} method \cite{DMRW} generates a single dilution of a sample using a binary search method that reduces the number of \textit{waste} droplets significantly compared to the \textit{BS} method by reusing the intermediate droplets. Recently, a reagent-saving mixing algorithm for preparing multiple target concentrations was proposed by Hsieh et al. \cite{TsungTCAD12}. For example, the dilution tree for the target set $\mathcal{L} = \{\frac{11}{64}, \frac{21}{64},\frac{31}{64},\frac{41}{64},\frac{51}{64}\}$ is shown in Fig. \ref{fig:tsung}, where the a black dot represents an available droplet (output or waste) \cite{TsungTCAD12}. 

Recently, another method for generating droplets with multiple $\mathcal{CF}$s without using any intermediate storage is reported by Mitra et al. based on de Bruijn graphs \cite{b22}. The \textit{BS} method was generalized for producing multiple $\mathcal{CF}$s with reduced \textit{mix-split} and \textit{waste} droplets \cite{SukantaISED2012}. If multiple target droplets with the same $\mathcal{CF}$ are required in a protocol, a dilution engine can be used \cite{SudipISED2012}. A reactant minimizing multiple dilution sample preparation algorithm was reported by Huang et al. \cite{REMIA}.

\section{Motivation and our contributions}
Gradients play essential roles in studying many biochemical phenomena in-vitro, including the growth of pathogens and efficacy of drugs. Among various types of dilution profiles, linear gradient is most widely used for biochemical analysis. Several sample preparation methods are available that can be used for generating specified gradients. 

Motivated by an example described by Brassard et al. \cite{b32}, we present an algorithm for producing any arbitrary linear dilution gradient with minimum wastage (reagent consumption is minimum). To illustrate the proposed algorithm we assume that the two boundary concentrations (first and last $\mathcal{CF}$s of the target sequence) are available. If droplets with the two boundary $\mathcal{CF}$s are not supplied, we can prepare them by diluting the original $(100\%)$ sample with a buffer $(0\%)$ following an earlier algorithm \cite{BS}, \cite{DMRW}. A simple observation that motivates us to design the proposed linear gradient generator is the following: mixing two non-consecutive $\mathcal{CF}$s, which are separated by an odd number of elements of the gradient sequence, produces the median of the two concentrations. This special property follows from the simple fact that the $\mathcal{CF}$ values in the linear gradient sequence are in arithmetic progression. This property is used to design our algorithm for producing the gradient with no wastage. Moreover, only the concentrations that are elements of the gradient set will be generated during this process. 

\section{Problem formulation}
The problem of linear dilution sample preparation can be formulated as follows. Let $\mathcal{L}=\{\frac{a}{2^n},\frac{a+d}{2^n},\frac{a+2d}{2^n},\ldots,\frac{a+2^kd}{2^n}\}$ be a linear gradient of targets to be generated from $\frac{0}{2^n}$ and $\frac{2^n}{2^n}$, i.e., $|\mathcal{L}|=2^k+1$. Our objective is to generate all $\mathcal{CF}$ values of $\mathcal{L}$ without generating any waste droplets; we assume that a sufficient supply of boundary concentrations ($\frac{a}{2^n}$ and $\frac{a+2^kd}{2^n}$) is available.

\subsection{Zero-waste linear dilution gradient}
The process of generating the target $\mathcal{CF}$s  satisfying a linear dilution gradient can be envisaged as a tree structure called linear dilution tree (LDT), as described below.
   
\begin{algorithm}
\linesnumbered
\scriptsize
\KwIn{A set of $\mathcal{CF}$s  ($\mathcal{L}$)}
\KwOut{Root of the linear dilution tree}

\If {$\mathcal{L}$ contains only one $\mathcal{CF}$}
{
	Create a leaf $v$ storing this point\;
}
\Else
{
	Let $C_{mid}$ be the median of $\mathcal{L}$\;
	Set $L_{left} =$ $\mathcal{CF}$s less than $C_{mid}$ in $\mathcal{L}$\; 
	Set $L_{right} =$ $\mathcal{CF}$s greater than $C_{mid}$ in $\mathcal{L}$\;
	$v_{left} = LDT(L_{left})$\;
	$v_{right} = LDT(L_{right})$\;
	Create a node $v$ storing $C_{mid}$\;
	Make $v_{left}$ the left child of $v$\;
	Make $v_{right}$ the right child of $v$\;
	
}
return $v$\;
\caption{Build linear dilution tree (\textit{LDT})}
\end{algorithm}

A linear dilution tree (\textit{LDT}) is a complete binary search tree having $2^k-1$ nodes, where each node  represents a $\mathcal{CF}$ value in the target set $\mathcal{L}$, where $|\mathcal{L}|=2^k+1$. Thus, the tree will have a depth of $(k-1)$, where the root is assumed to be at depth $0$. 

Algorithm 1 builds \textit{LDT} from the input target set, on which Algorithm 2 described below, will be run to produce the droplets in the target set $\mathcal{L}$.
\begin{algorithm}
\scriptsize
\linesnumbered
\KwIn{$\mathcal{L}=\{\frac{a}{2^n},\frac{a+d}{2^n},\frac{a+2d}{2^n},\ldots,\frac{a+2^kd}{2^n}\}$}
\KwOut{Output ordering of $\mathcal{CF}$s in $\mathcal{L}$}

$\mathcal{L'} = \{\frac{a+d}{2^n},\frac{a+2d}{2^n},\ldots,\frac{a+(2^k-1)d}{2^n}\}$\;
\Comment {$\frac{a}{2^n}$ and $\frac{a+2^kd}{2^n}$ have unlimited supply}\;
$root = LDT(\mathcal{L'})$\;
return $postorder(root)$\;
\Comment {$postorder(root)$ is the post-order traversal of the binary tree}\; 

\caption{Linear dilution}
\label{alg:ldt}
\end{algorithm}

\subsection{An illustrative example}
\begin{figure}[!h]
\centering
\includegraphics[scale=0.5]{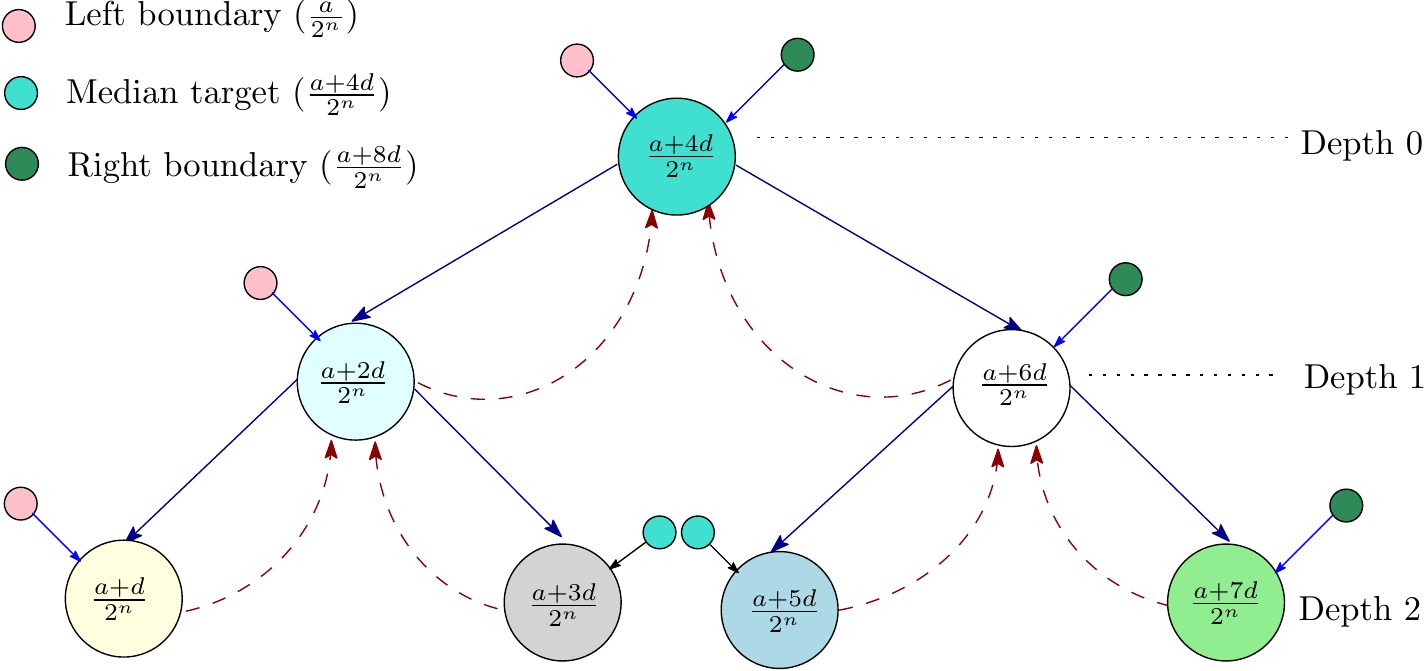}
\caption{Linear dilution tree}
\label{fig:ldt}
\end{figure}


As an illustration, let us consider $k=3$. Let $\mathcal{T}_{linear}=\{\frac{a}{2^n},\frac{a+d}{2^n},\frac{a+2d}{2^n},\ldots,\frac{a+2^3d}{2^n}\}$ be a linear gradient of targets to be generated from $\frac{0}{2^n}$ and $\frac{2^n}{2^n}$, i.e., $|\mathcal{T}_{linear}|=2^3+1=9$.  The corresponding (\textit{LDT}) is shown in Fig. \ref{fig:ldt}, which is generated by Algorithm 1. We traverse the tree in depth-first order and produce the droplets in a post-order mixing sequence. We assume that the two boundary $\mathcal{CF}$s  $\frac{a}{2^n}$ and $\frac{a+2^3d}{2^n}$ are supplied. Initially, we generate two droplets with $\mathcal{CF}$  $\frac{a+4d}{2^n}$ by mixing one droplet of $\frac{a}{2^n}$ and $\frac{a+8d}{2^n}$ each (represented as the root in Fig. \ref{fig:ldt}). One of these droplets is stored and the other one is mixed with $\frac{a}{2^n}$ to produce two droplets of $\frac{a+2d}{2^n}$. Again, one of them is stored and the other one is mixed with $\frac{a}{2^n}$ to generate two droplets of $\frac{a+d}{2^n}$ (leftmost leaf), out of which one droplet is sent to the output and the other one is stored. Next, the two droplets with $\mathcal{CF}$  $\frac{a+2d}{2^n}$ and $\frac{a+4d}{2^n}$, which were stored in the first two steps, are mixed to produce two droplets of $\frac{a+3d}{2^n}$. One of them is sent to the output; the remaining one is mixed with the one with $\mathcal{CF}$   $\frac{a+d}{2^n}$ stored in the third step. This step regenerates two droplets with $\mathcal{CF}$  $\frac{a+2d}{2^n}$, which were consumed in earlier steps. One of them is stored again and now the other one is transported to the output. Similar \textit{mix-split} sequences are performed on the right half of  \textit{LDT} in post-order fashion, and finally, two droplets of $\frac{a+4d}{2^n}$ (represented as the root) are regenerated by mixing $\frac{a+6d}{2^n}$ with $\frac{a+2d}{2^n}$. It may be observed that {\it no waste droplet} is produced for generating the entire linear dilution sequence $\mathcal{T}_{linear}$. Only one droplet for every non-boundary $\mathcal{CF}$ value in the gradient is produced, excepting the median one, for which two droplets are produced.

The following observations are now immediate. 
\begin{observation}
The droplets with boundary $\mathcal{CF}$s   are used only along the leftmost and the rightmost root-to-leaf path in \textit{LDT}.
\label{obs:1}
\end{observation}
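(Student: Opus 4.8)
The plan is to recast the statement in terms of the integer indices of the gradient and then reduce it to an elementary property of the recursive median construction performed by Algorithm~1. First I would identify each concentration $\frac{a+id}{2^n}$ with its index $i$, so that the two boundary concentrations are $i=0$ and $i=2^k$, while $\mathcal{L}'$ corresponds to the index set $\{1,2,\ldots,2^k-1\}$. Since the \textit{LDT} is the complete binary search tree obtained by recursively selecting the median, every node corresponds to a contiguous index interval $[l,r]\subseteq[1,2^k-1]$ and stores the median index $m=\tfrac{l+r}{2}$.

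The key fact I would establish next is the \emph{mixing recipe} for a single node: the droplet stored at a node covering $[l,r]$ is produced by mixing the droplets at indices $l-1$ and $r+1$, i.e.\ the two indices lying immediately outside its interval. This follows from the median property noted earlier, because
\[
\frac{\bigl(a+(l-1)d\bigr)/2^n + \bigl(a+(r+1)d\bigr)/2^n}{2}
= \frac{a+\tfrac{l+r}{2}\,d}{2^n}
= \frac{a+md}{2^n}.
\]
Indices $l-1$ and $r+1$ are exactly the concentrations that flank the subtree rooted at that node; in the post-order schedule they are available either as boundary droplets or as droplets already stored at an ancestor.

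With the recipe in hand the observation becomes a statement about interval endpoints. The low boundary ($i=0$) is consumed by a node's \textit{mix-split} step if and only if $l-1=0$, that is $l=1$; symmetrically the high boundary ($i=2^k$) is used if and only if $r=2^k-1$. I would then argue that in the recursive median split, passing from a node $[l,r]$ to its left child $[l,m-1]$ preserves the left endpoint $l$, whereas passing to the right child $[m+1,r]$ strictly increases it. Hence the nodes with $l=1$ are precisely the root together with the uninterrupted chain of left children, namely the leftmost root-to-leaf path, and by the mirror-image argument the nodes with $r=2^k-1$ are precisely those on the rightmost path. Combining the two endpoint characterizations yields the claim, and is consistent with the root lying on both paths and therefore consuming both boundaries.

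The step I expect to be the main obstacle is the justification of the mixing recipe, i.e.\ that the two droplets consumed at a node are exactly its interval neighbours $l-1$ and $r+1$ rather than some other pair averaging to $m$. Rather than appealing to the illustrative example, I would prove this by induction on the depth, showing that when a node is reached in the post-order traversal the only droplets the schedule makes available to it are those stored at its immediate interval boundaries; the averaging identity above then forces the stated pair. Everything after that is the routine endpoint bookkeeping described above.
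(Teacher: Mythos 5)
Your proposal is correct, and it actually supplies more than the paper does: the paper states this observation without proof, declaring it ``immediate'' from the construction and the worked example. Your interval formalization --- each node of the \textit{LDT} owns a contiguous index range $[l,r]$ with stored median $m=\tfrac{l+r}{2}$, the node is produced by mixing the flanking indices $l-1$ and $r+1$, and $l=1$ (resp.\ $r=2^k-1$) holds exactly on the leftmost (resp.\ rightmost) root-to-leaf path because a right (resp.\ left) turn strictly increases $l$ (resp.\ decreases $r$) --- is a sound and complete justification of exactly the median-mixing property the authors invoke informally in their motivation section. One small addition would make it airtight: the algorithm performs a second kind of mixing, namely the \emph{regeneration} of an internal node from its two children (Observation 2 in the paper), and you should note that these mixes combine indices lying strictly inside $[1,2^k-1]$ and hence can never consume a boundary droplet; your recipe as stated covers only the initial production of each node. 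With that one-line remark appended, your argument is complete.
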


\begin{observation}
The droplets with the $\mathcal{CF}$ values corresponding to each internal node of \textit{LDT} are used in subsequent mixing operations after their production and are regenerated later for replenishment.
\label{obs:2}
\end{observation}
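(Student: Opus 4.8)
The plan is to work with integer indices instead of concentrations: label the gradient element $\frac{a+id}{2^n}$ by its index $i\in\{0,1,\dots,2^k\}$, so that the boundaries are $i=0$ and $i=2^k$ and the median property quoted earlier becomes purely arithmetic---mixing the droplets at indices $i$ and $j$ with $i+j$ even (equivalently, separated by an odd number of gradient elements) yields the droplet at index $\frac{i+j}{2}$. Because Algorithm~1 builds \textit{LDT} by recursively taking medians, each node $v$ corresponds to a contiguous index range $[l,r]\subseteq\{1,\dots,2^k-1\}$ and stores the index $m=\frac{l+r}{2}$; its left and right children store $m_L=\frac{l+(m-1)}{2}$ and $m_R=\frac{(m+1)+r}{2}$, the medians of $[l,m-1]$ and $[m+1,r]$.

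First I would record three one-line identities, each obtained by substituting $m=\frac{l+r}{2}$. \emph{Production:} $\frac{(l-1)+(r+1)}{2}=m$, so $C_v$ is manufactured by mixing the two \emph{outer neighbours} $l-1$ and $r+1$ of its range. \emph{Descent:} $\frac{(l-1)+m}{2}=m_L$ and $\frac{m+(r+1)}{2}=m_R$, so a droplet of $C_v$ mixed with the outer-left neighbour yields the left child and a droplet of $C_v$ mixed with the outer-right neighbour yields the right child. \emph{Replenishment:} $\frac{m_L+m_R}{2}=\frac{l+r+2m}{4}=m$ (using $l+r=2m$), so mixing the two children regenerates $C_v$. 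The descent identity already proves the first half of the claim---each internal droplet is consumed in a later mixing step---and the replenishment identity supplies the regenerated copy; what remains is to pin down the ordering.

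To turn these static identities into the temporal statements (``subsequent'', ``later'') I would induct on subtree height, following the post-order schedule of Algorithm~2. The inductive invariant for the subtree rooted at $v$ is: given a supply of outer-neighbour droplets ($l-1$ and $r+1$), the schedule (i)~produces the two droplets of $C_v$ by the production identity, (ii)~hands one of them to the left child call as its outer-right neighbour and the other to the right child call as its outer-left neighbour, and (iii)~after both calls return their roots $m_L,m_R$, regenerates $C_v$ by the replenishment identity, emitting one copy as $v$'s output and leaving one for the parent. The base case (a leaf) is immediate; for the inductive step the two child calls are scheduled before step~(iii) in post-order, so the two droplets of $C_v$ are consumed strictly before $C_v$ is regenerated, which is exactly the content of the observation.

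The step I expect to be the main obstacle is not the algebra but the \emph{supply bookkeeping} inside the invariant: I must certify that every outer-neighbour droplet demanded by step~(i) and by the recursive calls is actually present at the instant it is needed. The outer neighbours of $v$ are the nearest flanking ancestor values (or the boundaries) and so lie outside its subtree; they must be furnished by the caller, and since the left subtree repeatedly draws on the same outer-left value as its parent, the caller's supply obligation accumulates down the left spine. Threading this supply through the recursion so that nothing is demanded before it exists---and nothing is left over as waste---is where the argument needs the most care. Observation~\ref{obs:1}, which confines the boundary droplets to the two extreme root-to-leaf paths, is precisely the base case of this supply argument and is what lets me close it.
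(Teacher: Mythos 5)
The paper does not actually prove this observation---it is presented as ``immediate'' from the worked $k=3$ example in Fig.~\ref{fig:ldt}, so your three identities (production, descent, replenishment) together with the post-order induction are a genuine formalization that goes beyond what the paper offers, and the algebra in all three identities is correct. The qualitative content of the observation---each internal droplet is consumed by later mixes and regenerated from its two children after both subtree calls return---does follow from your descent and replenishment identities plus the post-order schedule.

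However, the inductive invariant as you state it cannot be pushed through the induction, because step~(i)/(ii) undercounts the supply of $C_v$. You produce exactly two droplets of $C_v$ and hand one to each child call, but each child call needs a droplet of $C_v$ for \emph{every} node on the spine of its subtree that faces $v$, not just for the child itself: the node with range $[l',r']$ is produced by mixing $l'-1$ and $r'+1$, so every node with $r'+1=m$ (the whole rightmost spine of the left subtree) consumes index-$m$ droplets. Concretely, in the paper's own example the left subtree of the root (index $4$) consumes one droplet of index $4$ to make node $2$ (mix $0$ and $4$) and a \emph{second} droplet of index $4$ to make node $3$ (mix $2$ and $4$), and the right subtree symmetrically consumes two more---four in all, which is exactly the $2^k$ non-regenerated copies counted by Lemma~\ref{lem:1} and the reason Lemma~\ref{lem:2} needs $2^k$ boundary droplets rather than two. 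So the invariant must be strengthened to ``the call for a height-$h$ subtree receives $2^{h}$ droplets of each outer neighbour (one per mix along the corresponding spine), distributed down that spine,'' with the production step repeated $2^{h-1}$ times at the subtree root. You correctly flag the supply bookkeeping as the main obstacle, but as written the accounting is off already at the first recursive call, not merely delicate; once the counts are corrected the rest of your argument closes and in fact simultaneously yields Lemmas~\ref{lem:1} and \ref{lem:2}, which is more than the paper extracts from this observation.
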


\begin{lemma}
The number of copies of each droplet generated at depth $i$ during the process is $2^{k-i}+2$ when $i < k$, and is $2$ when $i = k$ (leaf node), where $|\mathcal{L}|=2^{k+1}+1$.
\label{lem:1}
\end{lemma}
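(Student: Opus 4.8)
The plan is to combine a conservation (flow-balance) principle with an induction on subtree depth, exploiting the recursive structure of the \textit{LDT} together with Observations~\ref{obs:1} and~\ref{obs:2}. Throughout I take $k$ to be the depth of the \textit{LDT}, so that $|\mathcal{L}|=2^{k+1}+1$, the tree carries $2^{k+1}-1$ interior $\mathcal{CF}$ values, and all $2^{k}$ leaves sit at depth $k$; by the symmetry of the balanced search tree built on an arithmetic progression, every node at a fixed depth is handled identically, so it suffices to count copies for one representative per depth. The starting point is that the process is zero-waste: every droplet ever generated is later either sent to the output or consumed as exactly one of the two inputs of a subsequent mix-split. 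Hence, for any concentration $v$, (copies of $v$ generated) $=$ (number of times $v$ is consumed as a mixing input) $+$ (number of times $v$ is output). The output term is immediate from the worked example, being $1$ for every interior node and $2$ for the median (the root), so the lemma reduces to counting consumptions.

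Next I would split the consumption count into the two roles a droplet can play, which are exactly the two phenomena recorded in Observations~\ref{obs:1} and~\ref{obs:2}. A node $v$ at depth $i$ is consumed (i) as a \emph{boundary} concentration in the mix-splits that generate its own left and right subtrees, and (ii) as one \emph{child} in the single regeneration of its parent. By Observation~\ref{obs:2} each internal node is produced, used, and regenerated exactly once, so role (ii) contributes $1$ for every non-root node and $0$ for the root. For role (i) I introduce $\beta(d)$, the number of copies of \emph{each} of its two bounding concentrations consumed while an entire subtree of depth $d$ is generated; then role (i) equals $2\,\beta(k-i-1)$ for an internal $v$ (which sits above two subtrees of depth $k-i-1$) and $0$ for a leaf (which bounds no subtree).

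The technical heart is to show $\beta(d)=2^{d}$ by induction on $d$, following the recursive construction of Algorithm~1. For the base case $d=0$, a single leaf is the median of its two boundaries and is produced by one mix-split, consuming one droplet of each, so $\beta(0)=1$. For the step, a depth-$d$ subtree has median $m$ whose two child subtrees have depth $d-1$ and share $m$ as their common inner boundary; by the inductive hypothesis each child subtree consumes $2^{d-1}$ copies of $m$ and $2^{d-1}$ copies of its own outer boundary. The key point is that $m$ cannot be regenerated from its children until \emph{both} child subtrees are complete, so all $2^{d}$ copies of $m$ demanded by the subtrees must come from mix-splits of the outer boundaries; each such mix-split yields two copies of $m$, forcing exactly $2^{d-1}$ of them, each consuming one droplet of each outer boundary. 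Adding the outer-boundary usage of the left child subtree to that of generating $m$ gives $\beta(d)=2^{d-1}+2^{d-1}=2^{d}$.

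Assembling the pieces finishes the argument. For the root ($i=0$): consumptions $=2\,\beta(k-1)=2^{k}$ and output $=2$, giving $2^{k}+2$. For an internal non-root node ($1\le i\le k-1$): consumptions $=2\,\beta(k-i-1)+1=2^{k-i}+1$ and output $=1$, again giving $2^{k-i}+2$. For a leaf ($i=k$): consumptions $=0+1=1$ and output $=1$, giving $2$, the separately stated case. The step I expect to be the main obstacle is the inductive claim $\beta(d)=2^{d}$, and specifically its ordering and inventory content: one must argue carefully that (a) the post-order schedule forbids regenerating a median before both of its subtrees are finished, forcing its entire supply to those subtrees to be drawn from boundary mix-splits, and (b) a single terminal regeneration yields exactly the two droplets needed for the median's own output together with the lone spare its parent later consumes. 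Fact (b) is precisely where ``the median is the unique value produced twice'' must be reconciled with the bookkeeping, and stating the per-call inventory invariants (what each sub-call leaves behind for its parent) cleanly is the part most likely to hide an off-by-one error.
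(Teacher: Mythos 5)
Your proof is correct, and it reaches the stated counts by a route that is organized differently from the paper's. The paper inducts directly on $k$ with Lemma~\ref{lem:1} itself as the induction statement: it splits $\mathcal{L}$ into $\mathcal{L}_{left}$, $\mathcal{L}_{median}$, $\mathcal{L}_{right}$, reads off the counts at depths $\geq 1$ from the hypothesis applied to the two half-problems, and obtains the root count by summing (via Observation~\ref{obs:1}) the droplets generated along the rightmost path of $\mathcal{L}_{left}$ and the leftmost path of $\mathcal{L}_{right}$ and halving, which yields the $2^k$ copies of the median consumed as a subtree boundary. You instead make the auxiliary quantity $\beta(d)$ --- the per-boundary consumption of a depth-$d$ subtree --- the object of the induction, prove $\beta(d)=2^d$, and recover each node's count from the conservation identity (generated $=$ consumed $+$ output) together with the split of consumption into its boundary role and its single appearance in the parent's regeneration. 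The two computations agree (the paper's halved path-sum is exactly your $\beta(k-1)=2^{k-1}$ per child subtree), but your organization buys a few things: Lemma~\ref{lem:2} falls out for free as $\beta(k)=2^k$; the leaf and root cases are handled uniformly rather than separately; and the ``$+2$'' gets a sharper reading (two outputs at the root versus one output plus one parent-regeneration input elsewhere) than the paper's blanket ``regeneration'' explanation. The one point to watch is that your conservation identity presupposes the zero-waste property, which the paper only formally asserts in Theorem~\ref{thm:1}, whose proof in turn cites Lemmas~\ref{lem:1} and~\ref{lem:2}; to avoid circularity you should justify it directly from the post-order schedule --- every droplet produced is either sent to the output or stored and later used as a mix input --- which is precisely the inventory invariant you flag at the end, and which is no harder to verify than the analogous unstated assumptions in the paper's own argument.
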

\begin{proof}
We proof the lemma using induction on $k$, i.e., the target set size $|\mathcal{L}|=2^{k+1}+1$.

Basis: For $k=1$, $|\mathcal{L}|=2^{1+1}+1=5$; in this case, we need to generate $3$ $\mathcal{CF}$ values from $2$ boundary concentrations. Fig. \ref{fig:tree_structure} shows the linear dilution tree.  The total number of droplets corresponding to the median target $\mathcal{CF}$ (at the root, depth $=0$) is $4(=2+2)$ and target concentration at depth $1$ is 2 each. Hence the Lemma \ref{lem:1} is true for $k=1$.
\begin{figure}[!h]
\centering
\includegraphics[scale=0.7]{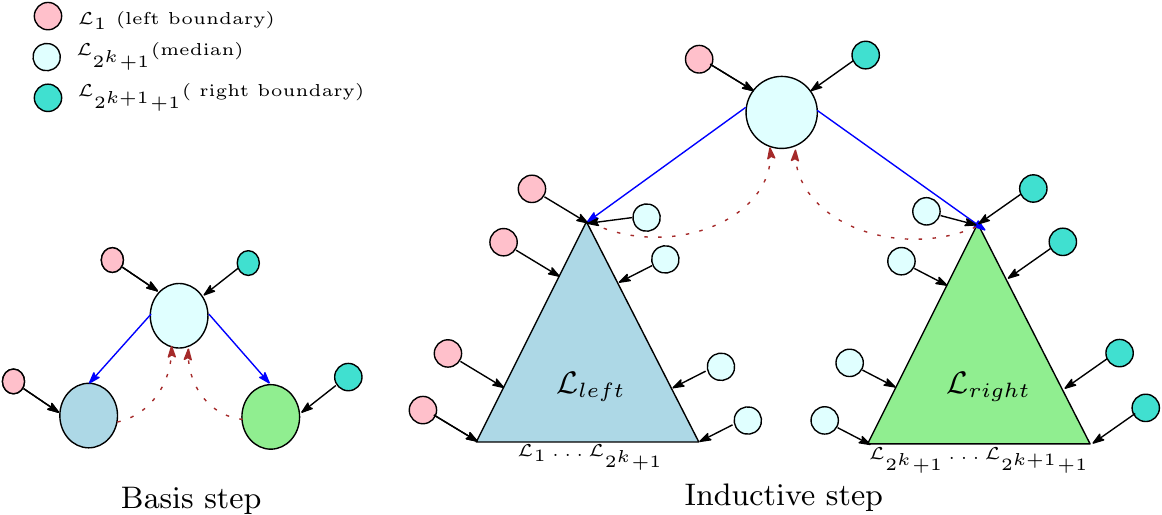}
\caption{}
\label{fig:tree_structure}
\end{figure}

Induction hypothesis: Assume the statement is true for all $m\leq k-1$. 

Inductive steps: Consider the target set $\mathcal{L}$ of size $2^{k+1}+1$ i.e., $m=k$. One can split $\mathcal{L}$ into three parts: $\mathcal{L}_{left}$ that contains the first $2^k+1$ targets of $\mathcal{L}$ i.e., $\mathcal{L}_{left} = \{\mathcal{L}_1,\ldots,\mathcal{L}_{2^k+1}\}$; $\mathcal{L}_{median}=\{\mathcal{L}_{2^k+1}\}$; $\mathcal{L}_{right} = \{\mathcal{L}_{2^k+1},\ldots,\mathcal{L}_{2^{k+1}+1}\}$. The elements in $\mathcal{L}_{left}$ can be generated by using $\mathcal{L}_{1}$ and $\mathcal{L}_{2^k+1}$ as boundary targets. Similarly, those in $\mathcal{L}_{right}$ can be generated by using $\mathcal{L}_{2^k+1}$ and $\mathcal{L}_{2^{k+1}+1}$ as boundary targets. One can easily generate $\mathcal{L}$ by using $\mathcal{L}_1$ and $\mathcal{L}_{2^{k+1}+1}$ as boundary targets (see Fig. \ref{fig:tree_structure}). By induction hypothesis, the number of each droplet generated during the process at depth $i$ of $\mathcal{L}_{left}$ and $\mathcal{L}_{right}$  is $2^{k-1-i}+2$ when $i < k-1 $, and is $2$ when $i = k-1$. Ignoring the regeneration part, the number of each droplet generated during the process at depth $i$ of $\mathcal{L}_{left}$ and $\mathcal{L}_{right}$  is $2^{k-1-i}$ when $i < k-1 $, and is $2$ at depth $k-1$. From Observation \ref{obs:1} it follows that $\mathcal{L}_{median}$ is used only in the rightmost path of $\mathcal{L}_{left}$ and in the leftmost path of $\mathcal{L}_{right}$ as shown in Fig. \ref{fig:tree_structure}. By inductive hypothesis, the total number of droplets generated (ignoring the regeneration part) is $2\times (\sum_{i=0}^{k-2} 2^{k-1-i}+2) = 4\times (\sum_{i=0}^{k-2} 2^{i}+1)=4\times(2^{k-1}-1+1)=4\times 2^{k-1}$. Hence, the required number of $\mathcal{L}_{median}$ droplets is $\frac{4\times 2^{k-1}}{2}=2^{k}$. Since the number of regenerated droplets is 2, the total number of droplets generated at the root (depth $=0$) is $2^{k}+2$. This completes the proof.  
\end{proof}

\begin{lemma}
The number of each boundary droplet required is $2^k$ for $k\geq 1$, where $|\mathcal{L}|=2^{k+1}+1$.
\label{lem:2}
\end{lemma}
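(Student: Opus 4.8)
The plan is to reduce the global count to a single extremal path and then read off the per-depth multiplicities from Lemma~\ref{lem:1}. By Observation~\ref{obs:1}, the left boundary $\frac{a}{2^n}$ is consumed only along the leftmost root-to-leaf path, and the right boundary only along the rightmost path; since these two paths are mirror images of one another, it suffices to count the left-boundary droplets and then invoke symmetry. So I would fix attention on the leftmost path, whose nodes sit at depths $0,1,\dots,k$, and account for how many boundary droplets are absorbed at each depth.

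The key step is to classify, at each node of this path, how its copies are manufactured. A node at depth $i$ on the leftmost path is exactly the median of the left boundary and its parent, so its only non-regenerative source is a $(1:1)$ mix pairing one left-boundary droplet with one parent droplet to yield two child droplets. By Lemma~\ref{lem:1} such an internal node ($i<k$) is produced in $2^{k-i}+2$ copies; by Observation~\ref{obs:2} exactly two of these are the regeneration copies obtained from its two children (consuming no boundary), leaving $2^{k-i}$ copies that come from boundary mixes, i.e. $2^{k-i}/2=2^{k-i-1}$ boundary droplets. The leftmost leaf ($i=k$) is never regenerated, so all of its $2$ copies are boundary mixes and absorb a single boundary droplet; the root ($i=0$) is the degenerate case in which the ``parent'' role is played by the right boundary, and the formula $2^{k-i-1}$ still returns $2^{k-1}$ there. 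Summing over the path then gives $\sum_{i=0}^{k-1}2^{k-i-1}+1=(2^{k}-1)+1=2^{k}$, and symmetry yields the same count for the right boundary.

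The main obstacle I anticipate is justifying the clean split ``two regeneration copies, the rest boundary mixes'' uniformly along the path, i.e. arguing that no copy of a leftmost-path concentration is ever supplied from storage in place of a boundary mix, and that the two endpoints (root and leaf) really are the only deviations from the generic internal-node accounting. This is where the \textit{LDT} semantics and the post-order schedule must be pinned down carefully. An alternative, should this bookkeeping prove fragile, is a direct induction on $k$ parallel to the proof of Lemma~\ref{lem:1}: the base case $k=1$ needs two copies of each boundary, and in the inductive step the left boundary is spent $2^{k-1}$ times inside the left half $\mathcal{L}_{left}$ (by the hypothesis) and $2^{k-1}$ further times in producing the shared median from the two boundaries, again totalling $2^{k}$.
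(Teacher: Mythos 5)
Your proposal is correct and follows essentially the same route as the paper's proof: restrict attention to the leftmost (resp.\ rightmost) path via Observation~\ref{obs:1}, use Lemma~\ref{lem:1} to count the non-regeneration copies at each depth ($2^{k-i}$ for $i<k$ and $2$ at the leaf), and divide by two because each $(1:1)$ mix on that path yields two droplets while consuming one boundary droplet; your per-depth division and the paper's single division of the total $\sum_{i=0}^{k-1}2^{k-i}+2$ give the identical sum $2^k$. Your extra care at the root (where the partner is the other boundary rather than a parent) and the fallback induction are refinements the paper omits, but they do not change the argument.
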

\begin{proof}
From Observation \ref{obs:1} it follows that boundary droplets are needed only for the nodes lying on the leftmost and rightmost paths of \textit{LDT}. Note that the regeneration process for an internal node does not require any boundary droplet. So the number of droplets generated excluding regeneration, is $2^{k-i}$ at depth $i$ and is $2$ at depth $k$, along the left- or rightmost path in \textit{LDT}. The total number of droplets along these paths is $\sum_{i=0}^{k-1} 2^{k-i}+2$. Hence, the total number of required boundary droplets will be given by
\begin{equation*}
\frac{\sum_{i=0}^{k-1} 2^{k-i}+2}{2}
=\left(\sum_{i=0}^{k-1} 2^i\right)+1  
= (2^k-1)+1
= 2^k
\end{equation*}
\end{proof}

\begin{theorem}
Algorithm 2 generates a linear dilution gradient $|\mathcal{L}|(=2^{k+1}+1$) in $2^{k-1}(k+4) - 1$ $(1:1)$ mix-split steps without producing any waste droplets, when $2^k$ droplets of each boundary $\mathcal{CF}$ are supplied.
\label{thm:1}
\end{theorem}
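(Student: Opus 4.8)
The plan is to convert the step count into a droplet count. Since every $(1:1)$ mix-split consumes two unit-volume droplets and produces exactly two, the total number of mix-split operations equals one half of the total number of droplets \emph{generated} over the whole run of Algorithm~2; the two boundary reservoirs are supplied rather than generated, and by Observation~\ref{obs:1} no mix-split ever reproduces a boundary $\mathcal{CF}$, so every generated droplet sits at a node of the \textit{LDT}. Thus the theorem reduces to (i) counting all generated droplets and (ii) verifying that none of them is ever discarded. I would obtain the count directly from Lemma~\ref{lem:1}, which already records, per node, how many copies of each $\mathcal{CF}$ are produced.

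First I would exploit the shape of the tree: the \textit{LDT} is a complete binary tree of depth $k$, so it has exactly $2^i$ nodes at depth $i$ for $0\le i\le k$. Weighting Lemma~\ref{lem:1}'s per-node counts by the number of nodes at each depth gives the total number of generated droplets as
\begin{equation*}
G=\sum_{i=0}^{k-1}2^i\bigl(2^{k-i}+2\bigr)+2^k\cdot 2
= k\,2^{k}+2^{k+2}-2 .
\end{equation*}
Writing $M$ for the number of mix-splits, $G=2M$, so $M=G/2=2^{k-1}(k+4)-1$, which is exactly the claimed step count. This half of the argument is purely mechanical once Lemma~\ref{lem:1} is in hand.

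The substantive part is the zero-waste claim, i.e. that all $G$ generated droplets are put to use. Here I would argue structurally from the post-order schedule together with Observations~\ref{obs:1} and~\ref{obs:2}: each leaf droplet is delivered as a target output, and each internal-node $\mathcal{CF}$ is first consumed while its two subtrees are built and is then \emph{regenerated}, the ``$+2$'' in Lemma~\ref{lem:1} being precisely the replenished copies. A consistency check supports this: of the $2M$ droplets consumed as inputs, $2\cdot 2^{k}$ are boundary droplets (Lemma~\ref{lem:2}) and the rest are generated, so exactly $G-(2M-2\cdot 2^{k})=2^{k+1}$ generated droplets survive as outputs, matching the $2^{k+1}$ targets delivered (one per non-boundary, non-median $\mathcal{CF}$, two for the median) with nothing left over. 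The hypothesis that $2^k$ droplets of each boundary $\mathcal{CF}$ are available is then nothing but Lemma~\ref{lem:2}.

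I expect the main obstacle to be making the zero-waste argument airtight rather than heuristic: the arithmetic and the reduction to a droplet count are routine, but certifying that the fixed post-order traversal with its interleaved regeneration steps never strands an intermediate droplet requires care. I would discharge this by an induction mirroring the three-way decomposition used in the proof of Lemma~\ref{lem:1} ($\mathcal{L}_{left}$, $\mathcal{L}_{median}$, $\mathcal{L}_{right}$), showing that at each recursive stage the droplets scheduled for consumption and for output exactly exhaust those produced, so that the waste count is identically zero at every level.
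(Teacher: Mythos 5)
Your proposal is correct and follows essentially the same route as the paper: both derive the step count from Lemma~\ref{lem:1} (you total the generated droplets over all $2^i$ nodes at depth $i$ and halve, whereas the paper counts mix-splits directly by separating leaf steps, internal production, and the $2^k-1$ regeneration steps --- an equivalent bookkeeping), and both dispose of the zero-waste claim by balancing input droplets (Lemma~\ref{lem:2}) against output droplets. Your closing remark is fair: the paper's own justification of the no-waste part is exactly the terse counting argument you give as a consistency check, with no further induction supplied.
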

\begin{proof}
The \textit{LDT} has $2^{k+1}-1$ nodes including $2^k$ leaf nodes. Each leaf node requires only one \textit{mix-split} operation. By Lemma \ref{lem:1} the number of each droplet generated at depth $i$ is $2^{k-i}+2$ for $0\leq i\leq k-1$, where the constant 2 accounts for its regeneration from its two children. Regeneration requires $2^k-1$ \textit{mix-split} steps. Hence the total number of \textit{mix-split} operations will be
\begin{equation*}
\begin{aligned}
& 2^k +\frac{\sum_{i=0}^{k-1} 2^{k-i}\times 2^i}{2} + 2^k-1 \\
=& 2^k +\frac{\sum_{i=0}^{k-1} 2^k}{2} + 2^k-1 \\
=& 2^{k+1} +k.2^{k-1} - 1\\
=& 2^{k-1}(k+4) - 1\\
\end{aligned}
\end{equation*}
The fact that no waste droplet is generated in this process follows easily by counting the number of input droplets (Lemma 2) and the output droplets.
\end{proof}
\label{obs:3}
\begin{observation}
The $\mathcal{CF}$ values of the gradient excluding the two boundary $\mathcal{CF}$s appear at the output of the generator in accordance to the post-order traversal sequence of \textit{LDT}.
\end{observation}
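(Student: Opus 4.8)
The plan is to prove the statement by structural induction on the \textit{LDT}, establishing that the physical mix-split schedule run by Algorithm \ref{alg:ldt} emits droplets to the output in exactly the order generated by the recursive definition of post-order traversal (left subtree, right subtree, node). Because the abstract return value of Algorithm \ref{alg:ldt} is literally $postorder(root)$, the substance of the statement is to certify that the \emph{physical} generation sequence illustrated in Fig. \ref{fig:ldt} realizes this abstract ordering node-for-node, rather than to recompute the traversal.

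The induction invariant I would carry is stated on the sequence of emissions: processing the subtree rooted at a node $v$ emits every target $\mathcal{CF}$ of that subtree, in the post-order of the subtree, with $v$ itself emitted last. The base case is a leaf, which is realized by one mix-split whose single output droplet is the post-order of a one-node subtree. For the inductive step at an internal node $v$, I would argue in three phases: (a) the schedule seeds the left child from $v$'s range (using a range endpoint and a copy of $C_v$) and recursively processes the left subtree, which by the induction hypothesis emits it in post-order; (b) symmetrically it processes the right subtree, emitting it in post-order; and (c) it \emph{regenerates} $C_v$ by mixing the retained droplets of $v$'s two children and emits that droplet last. The regeneration in (c) is exactly the ``median of two $\mathcal{CF}$s separated by an odd number of gradient elements'' property noted in the motivation: the two children of $v$ are symmetric about $v$ in the arithmetic progression, so their $(1{:}1)$ mixture reproduces $C_v$. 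Concatenating the three phases yields left-post-order, then right-post-order, then $v$, which is the post-order of the subtree at $v$; instantiating the invariant at the root proves the statement.

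The delicate part, and where I would spend the most care, is verifying that phases (a) and (b) never emit a target out of turn and that every droplet consumed in the regeneration of (c) is present exactly when needed. Two points must be checked. First, the top-down ``seeding'' that descends a leftmost (or rightmost) path only re-derives intermediate medians---possibly re-dispensing boundary droplets, which Observation \ref{obs:1} permits---and never sends an intermediate droplet to the output prematurely; its first genuine emission is the leftmost leaf, which is precisely the first node of the global post-order sequence. Second, the copies of the two children's $\mathcal{CF}$s required to regenerate $C_v$ are exactly the extra stored droplets guaranteed by the ``$+2$'' regeneration bookkeeping of Lemma \ref{lem:1} together with the consume-then-replenish behaviour of Observation \ref{obs:2}. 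Once this storage-and-regeneration accounting is pinned down, the interleaving of seeding, subtree processing, and regeneration collapses onto the recursive post-order schedule, and the induction closes.
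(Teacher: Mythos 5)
Your proposal is correct, but it is worth noting that the paper offers no proof of this statement at all: it is presented as an \emph{Observation} that is considered immediate, since Algorithm 2 literally returns $postorder(root)$ and the illustrative example walks through the physical schedule that realizes it. What you have done is supply the argument the paper leaves implicit, namely that the \emph{physical} mix-split schedule (seeding a node's children from its range endpoints, recursing left then right, then regenerating the node from retained copies of its two children) emits droplets in exactly the abstract post-order. Your structural induction is sound: the invariant that a subtree's root is emitted last, the identification of regeneration with the ``median of two $\mathcal{CF}$s symmetric about $v$'' property, and the two checks you isolate (no premature emission during the top-down seeding descent, and availability of the two children's retained copies at regeneration time, which is the ``$+2$'' bookkeeping of Lemma \ref{lem:1} and Observation \ref{obs:2}) are precisely the points that need verification and match the behaviour traced in the paper's worked example. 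The only cosmetic slip is in your base case: a leaf's single mix-split produces \emph{two} droplets, one emitted and one stored for the parent's regeneration, a detail your own phase (c) relies on. In short, your route is more rigorous than the paper's, which simply asserts the claim; the paper's approach buys brevity by leaning on the example and on the definition of Algorithm 2, while yours buys an actual certificate that the droplet-level schedule never deviates from the traversal order.
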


The following theorem provides an upper bound on the storage requirement during gradient generation.
\begin{theorem}
Algorithm 2 requires at most $2k$ storage electrodes at any instant of time, where $|\mathcal{L}|=2^{k+1}+1$.
\label{thm:2}
\end{theorem}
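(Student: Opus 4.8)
The plan is to prove the bound by induction on the tree depth $k$, establishing the recurrence $\sigma(k)\le\sigma(k-1)+2$, where $\sigma(k)$ denotes the maximum number of occupied storage electrodes at any instant while Algorithm~2 processes an \textit{LDT} of depth $k$ (i.e. $|\mathcal{L}|=2^{k+1}+1$), counting \emph{every} resident droplet including boundary droplets that must be held. Summing the recurrence from a directly-verified base case $\sigma(1)=2$ then gives $\sigma(k)\le 2k$ at once.

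First I would recast ``storage at any instant'' in terms of the post-order execution, reusing the recursive decomposition from the proof of Lemma~\ref{lem:1}. Processing a depth-$k$ tree splits into three consecutive phases: (i) generate the left depth-$(k-1)$ subtree with boundaries $\mathcal{L}_1$ and the median $M$; (ii) generate the right depth-$(k-1)$ subtree with boundaries $M$ and $\mathcal{L}_{2^{k+1}+1}$; and (iii) regenerate $M$ from its two now-available children for output. Phase (iii) holds only a constant number of droplets, so the maximum occupancy is attained inside phase (i) or (ii).

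The heart of the argument is to show that throughout the two recursive phases at most two droplets tied to the \emph{root level} are resident, over and above whatever the current depth-$(k-1)$ subtree itself requires. By Observation~\ref{obs:2}, completing the left subtree regenerates one copy of the root's left child, which must be held until phase (iii); while the right subtree runs we additionally keep one copy of the median $M$ to seed it as the left boundary. Observation~\ref{obs:1} ensures that the true end boundaries $\mathcal{L}_1,\mathcal{L}_{2^{k+1}+1}$ are drawn from the uncounted supply reservoirs along the outermost paths and never pile up, so no third root-level droplet survives. Taking the worse of the two phases gives $\sigma(k)\le\sigma(k-1)+2$, and unrolling from $\sigma(1)=2$ yields the claim. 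Equivalently, one can charge the stored droplets to the nodes on the current root-to-node path of the depth-first walk and argue that each of the at most $k$ internal levels owns at most two droplets.

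The main obstacle is the bookkeeping across the phase boundaries, and in particular the subtrees whose \emph{both} boundaries are internal $\mathcal{CF}$s (every subtree not touching the leftmost or rightmost spine). For these the boundary droplets are not free supply but are exactly the root-level survivors inherited from ancestor levels; the delicate point is to prove that these inherited droplets are precisely the ones counted in the ``$+2$'' and are not independently re-stored, so that the per-level charge never exceeds two and nothing is double-counted. I would discharge this by establishing the invariant that, whenever the walk descends into a right subtree, the required inner boundary is regenerated just-in-time --- consuming rather than hoarding droplets, consistent with the counting in Lemma~\ref{lem:2} --- so that the act of parking a freshly produced copy always coincides with the consumption of an ancestor's boundary copy. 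This conservation keeps the running total at most $2k$, and the base case $\sigma(1)=2$ is then settled by direct inspection of the depth-$1$ tree.
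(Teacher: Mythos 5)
Your proposal is correct and follows essentially the same route as the paper: an induction on $k$ using the left-subtree/median/right-subtree decomposition, charging at most two extra resident droplets per level (the regenerated left child held for the final median regeneration, plus the stored median copy), with the base case $\sigma(1)=2$ checked directly. Your additional care about subtrees whose boundaries are both internal $\mathcal{CF}$s makes explicit a point the paper dispatches with ``by analogous argument,'' but it does not change the structure of the proof.
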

\begin{proof}
We proof the lemma using induction on $k$.

Basis: For $k=1$, $|\mathcal{L}|=2^{1+1}+1=5$. One needs to generate three $\mathcal{CF}$s from $2$ boundary concentrations. It is easy to check that we require at most 2 intermediate storage elements in this case. Hence the theorem is true for $k=1$.

Inductive hypothesis: Assume the statement is true for $k-1$.

Inductive steps: Consider the target set $\mathcal{L}$ of size $2^{k+1}+1$. One can split $\mathcal{L}$ into three parts: $\mathcal{L}_{left}$ that contains the first $2^k+1$ targets of $\mathcal{L}$, i.e., $\mathcal{L}_{left} = \{\mathcal{L}_1,\ldots,\mathcal{L}_{2^k+1}\}$; $\mathcal{L}_{median}$ that contains the median target of $\mathcal{L}$; $\mathcal{L}_{right} = \{\mathcal{L}_{2^k+1},\ldots,\mathcal{L}_{2^{k+1}+1}\}$ (see Fig. \ref{fig:tree_structure}). By inductive hypothesis, the left subtree requires $2(k-1)$ storage. Additionally, we need to store one droplet of $\mathcal{CF}$($\mathcal{L}_{2^k+1}$) corresponding to the root. So, a total of $2(k-1) +1=2k-1 $ storage is required in order to generate all the $\mathcal{CF}$s on the left subtree. 

When we generate the target set for the right subtree, we need to store the root $\mathcal{CF}$ of the left subtree for regeneration purpose. By analogous argument, we can claim the right subtree requires $2k-1$ storage. Hence, the total number of storage required is $(2k-1) +1 = 2k$ for a linear dilution tree of size $|\mathcal{L}|=2^{k+1}+1$. 
\end{proof}

\subsection{Generation of boundary droplets}
\begin{figure}[!h]
\centering
\includegraphics[scale=0.6]{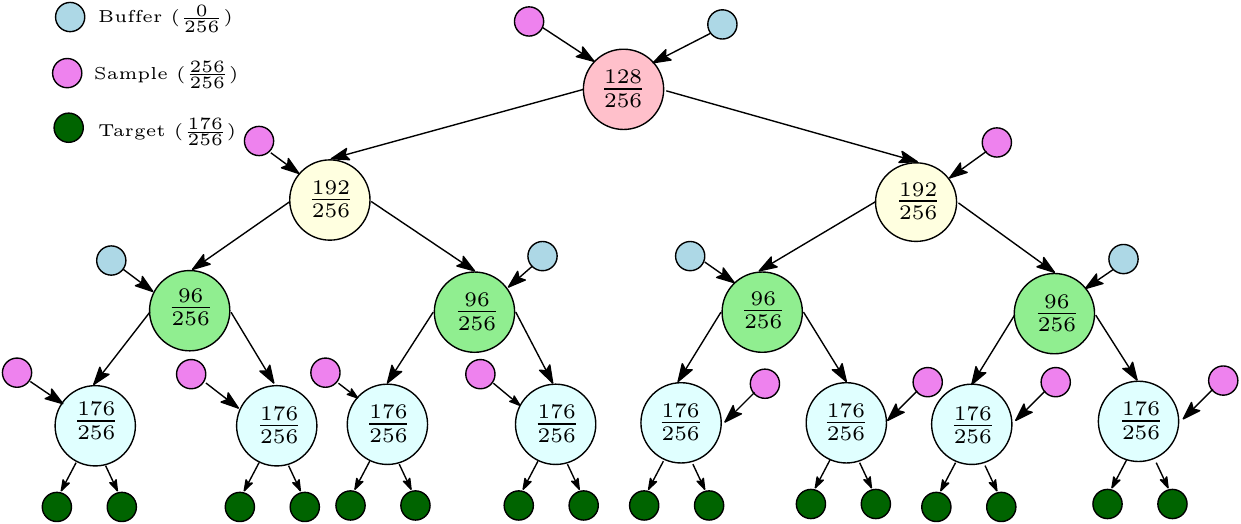}
\caption{Generation of a target $\mathcal{CF}$ with multiple demand}
\label{fig:dil-engine}
\end{figure}

In order to produce a dilution gradient of size $2^{k+1}+1$, we need to supply $2^k$ droplets for each boundary $\mathcal{CF}$ (Lemma \ref{lem:2}). Here, we demonstrate how a low-cost dilution engine  \cite{SudipISED2012} can be integrated on-chip for this purpose. We will illustrate the technique using the following example. Let $\frac{176}{256}$ be a boundary $\mathcal{CF}$. The corresponding dilution tree is shown in the Fig. \ref{fig:dil-engine}. Each root-to-leaf path represents a sequence of \textit{mix-split} operations needed to generate the droplet by applying the \textit{BS} method. One can store the intermediate droplets into a stack and generate two target droplets each time by repeatedly mixing the droplet on top of the stack with either sample or buffer as needed. The dilution tree in Fig. \ref{fig:dil-engine} has $16$ target droplets generated therein. To admit a maximum error of $\frac{1}{2^{n+1}}$, an $n$-depth dilution tree would suffice, and hence the number of storage elements needed to produce multiple droplets with the given $\mathcal{CF}$ will be at most $n-1$ \cite{SudipISED2012}.

\subsection{Generation of linear gradient sequence of arbitrary length}

If the number of elements in the gradient is not of the form $2^{k+1}+1$, the above procedure needs certain modification. Let $Bin(x,m)$ denote the $m$-bit binary representation of $x$ and $ZC(Bin(x,m))$ denote the number of $0$'s between leftmost and rightmost $1$ in it. To illustrate the modification, let us assume that the target set be $\mathcal{L} = \{\mathcal{L}_0,\mathcal{L}_1,\ldots,\mathcal{L}_{10}\}$, i.e., $|\mathcal{L}| = 11$. We consider another target set $\mathcal{L'} =  \{\mathcal{L}_0,\mathcal{L}_1,\ldots,\mathcal{L}_{16}\}$ of size $2^{2+2}+1 = 17 $. Fig. \ref{fig:partial-tree} shows the dilution tree for $\mathcal{L'}$, where the extra part of the tree is not generated (shown as dotted). Clearly, $2^{2+1}+1 < |\mathcal{L}| <2^{2+2}+1$, i.e., $k =2$. Now, $Bin(|\mathcal{L}|-1,4) = 1010$ and the number of \textit{waste droplets} is equal to $ZC(Bin(10,4)) = 1$ ($\mathcal{L}_{12}$).

The following theorem can be proved easily.

\begin{figure}[!h]
\centering
\begin{subfigure}{.25\textwidth}
  \centering
  \includegraphics[scale=.7]{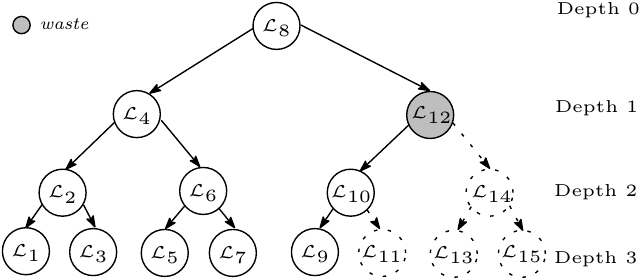}
  \caption{Dilution tree for $\mathcal{L'}$}
  \label{fig:partial-tree}
\end{subfigure}%
\begin{subfigure}{.25\textwidth}
  \centering
  \includegraphics[scale=.4]{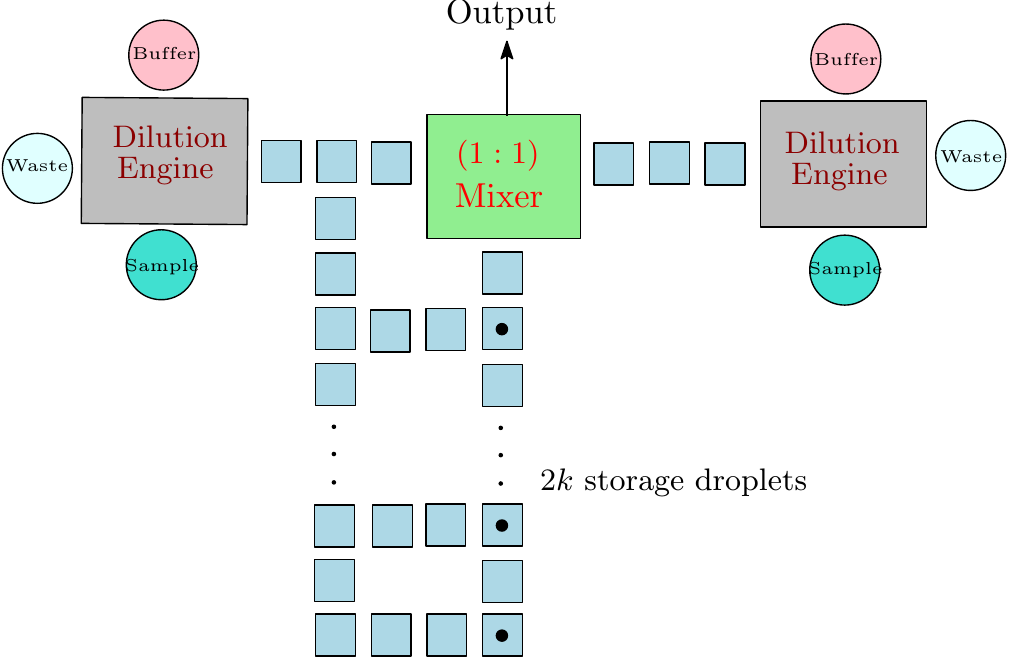}
  \caption{Architectural layout}
  \label{fig:arch}
\end{subfigure}
\caption{ (a)-Dilution tree for $\mathcal{L'}$ (b)-Architectural layout}
\end{figure}

\begin{theorem}
The number of \textit{waste droplets} produced while generating $\mathcal{L}$ is equal to $ZC(Bin(|\mathcal{L}| - 1,k+2))$, where $2^{k+1}+1 < |\mathcal{L}| < 2^{k+2} +1$.\hfill $\Box$
\label{thm:3}
\end{theorem}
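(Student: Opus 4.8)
The plan is to reduce the claim to the zero-waste result already established in Theorem~\ref{thm:1}. First I would embed $\mathcal{L}$ into the complete dyadic gradient $\mathcal{L}'$ of size $2^{k+2}+1$, build its \textit{LDT}, and run Algorithm~\ref{alg:ldt} but retain only those nodes needed to output the targets $\mathcal{L}_1,\ldots,\mathcal{L}_R$, where $R=|\mathcal{L}|-1$; every node whose index exceeds $R$ and is not required as an intermediate is pruned. The backbone of the argument is the droplet-conservation identity invoked at the end of the proof of Theorem~\ref{thm:1}: since each $(1:1)$ \textit{mix-split} consumes two droplets and produces two, the number of boundary droplets entering the process equals the number of output droplets plus the number of waste droplets, i.e. $W=B-O$. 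The proof then amounts to evaluating $B-O$ for the pruned tree.

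Next I would run an induction on $k$ along the same median split used in Lemmas~\ref{lem:1}--\ref{lem:2} and Theorem~\ref{thm:2}. Because $2^{k+1}<R<2^{k+2}$, the root $\mathcal{L}_{2^{k+1}}$ and the whole left subtree of the embedding are themselves targets, so by Theorem~\ref{thm:1} they are produced with no waste, and the problem collapses to the right subtree: a block of size $2^{k+1}$ in which one must generate the targets up to offset $R-2^{k+1}$, a strictly smaller instance of the same problem. Reading the bits of $R$ from its leading $1$ downward, I would show that a $1$-bit corresponds to a sub-block whose root is a genuine target, whose surplus droplet is consumed in a subsequent mixing step (Observation~\ref{obs:2}) and so creates no waste, whereas a $0$-bit corresponds to a sub-block whose root lies strictly above $R$: this root must still be produced to reach the targets on its left, but its entire right subtree is pruned, so one droplet is left without any downstream consumer. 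The recursion terminates at the rightmost $1$ of $R$, below which the retained targets form a complete dyadic sub-gradient that is again generated with zero waste by Theorem~\ref{thm:1}. Summing one waste droplet per $0$ lying strictly between the extreme $1$'s of $R$ yields precisely $ZC(Bin(|\mathcal{L}|-1,k+2))$.

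The part I expect to be genuinely delicate is the per-level droplet bookkeeping, because the regeneration steps move droplets around and the physical identity of the leftover droplet is \emph{not} fixed by the schedule --- only its count is, through $W=B-O$. I would therefore avoid trying to point at a single orphaned droplet at each $0$-bit and instead discharge the count globally, proving that $O=R$ (or $R+1$ under the double-median convention of the illustrative example, the extra unit cancelling in $B-O$) while the boundary consumption along the retained left spine together with the single chain of right-side helper nodes equals $R+ZC(Bin(R,k+2))$. The most error-prone step is the terminal level: when $R$ is odd the rightmost target is a leaf, and one must verify that its surplus copy is still absorbed --- mixing the droplets of index $R-2$ and $R$ regenerates the wanted target of index $R-1$ --- so that the terminal level contributes nothing and no spurious off-by-one enters the total.
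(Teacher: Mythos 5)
The paper offers no proof of Theorem~\ref{thm:3} to compare against --- it is stated with ``can be proved easily'' and illustrated only by the $|\mathcal{L}|=11$ example --- so your proposal must stand on its own. Its skeleton is sound and, I believe, the intended one: embed $\mathcal{L}$ in the complete gradient of size $2^{k+2}+1$, retain exactly the targets together with those non-target nodes that are ancestors of some target, and observe that a node $\mathcal{L}_m$ with $m>R$ is such a forced ``helper'' precisely when $m-2^{v_2(m)}<R<m$, which puts the helpers in bijection with the $0$'s of $Bin(R,k+2)$ lying between its extreme $1$'s. That bijection is the combinatorial heart of the theorem and you have it right.

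The gap is in the quantitative half. Your final sentence asserts that the boundary consumption equals $R+ZC(Bin(R,k+2))$, but this is exactly the content of the theorem and is never derived: the conservation identity $W=B-O$ only pays off once $B$ is computed for a \emph{specified} pruned schedule, and neither you nor the paper pins one down. Concretely, a helper $h$ is produced $P_h$ times yielding $2P_h$ droplets and is consumed $C_h$ times as the right boundary of its retained left sub-block, so its surplus is $2\lceil C_h/2\rceil - C_h$, which is $1$ only when $C_h$ is odd; your ``one droplet is left without any downstream consumer'' silently assumes this parity, and it must be proved by recursing on the low-order bits of $R$. Symmetrically, you must rule out waste appearing at \emph{target} nodes: in the pruned tree the demand for copies of an internal target (as mixing partner plus output) need not be even --- e.g.\ the root's demand can be odd --- so surplus droplets of target $\mathcal{CF}$s do arise, and the theorem only holds under the convention (implicit in the paper's example, where the extra median droplet is ``output'') that such surplus is not counted as waste. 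Without (i) fixing that convention, (ii) the parity argument for every helper, and (iii) the verification that the recursion bottoms out correctly at the trailing $1$ of $R$, the count $W=ZC(Bin(|\mathcal{L}|-1,k+2))$ is asserted rather than established.
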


\section{Architecture}

An architectural layout for producing a linear dilution gradient is shown in Fig. \ref{fig:arch}. If boundary $\mathcal{CF}$s  other than $0\%$ and $100\%$ are needed, a dilution engine \cite{SudipISED2012} can be used for generating them. We provide two dilution engines for generating the boundary $\mathcal{CF}$s, which can be run in parallel to reduce sample preparation time. Each dilution engine is equipped with a stack of $(n-1)$ storage droplets in order to increase the throughput of boundary droplets and to reduce the number of \textit{waste} droplets. The detailed layout of the dilution engine can be found elsewhere \cite{SudipISED2012}. To generate the gradient part, we use one $(1:1)$ mix-split module and $2k$ additional storage cells (Theorem \ref{thm:2}). Thus, to produce a gradient of size $|\mathcal{L}|=2^{k+1}+1$, with a maximum error of $\frac{1}{2^{n+1}}$ in each of the target $\mathcal{CF}$, one needs a total of $2(n+k-1)$ storage electrodes. The overall execution time for generating the gradient can further be minimized by adopting a scheduling algorithm \cite{OGB2012} for the best utilization of resources. 
 
\section{Simulation results}

We have performed extensive simulation on  various target sets (Table \ref{tab:TS}) and calculated the number of $(1:1)$-\textit{mix-split} steps and \textit{waste} droplets. We have compared our results with the methods of Mitra et al. \cite{b22}, Hsieh et al. \cite{TsungTCAD12}, and Huang et al. \cite{REMIA}. The results are shown in Table \ref{tab:mix-split-and-waste}.  The number of \textit{waste} droplets in \textit{LDT} for the proposed method is shown within  parentheses in Table \ref{tab:mix-split-and-waste} along with total \textit{mix-split} steps ($\mathcal{M}$) and \textit{waste} ($\mathcal{W}$) droplets. 

\begin{table}[h]
\caption{Target set ($n=10$)} 
\centering 
{\fontsize{8pt}{6pt}\selectfont 
\begin{tabular}{|c|c|c|}
\hline
& TS size& Target set\\ \hline 
$TS_1$ & 5&$\{\frac{50}{2^{10}},\frac{70}{2^{10}},\ldots,\frac{130}{2^{10}}\}$  \\ \hline
$TS_2$ & 10&$\{\frac{110}{2^{10}},\frac{120}{2^{10}},\ldots,\frac{200}{2^{10}}\}$ \\ \hline
$TS_3$ & 17&$\{\frac{20}{2^{10}},\frac{70}{2^{10}},\ldots,\frac{820}{2^{10}}\}$ \\ \hline 
$TS_4$ & 20&$\{\frac{40}{2^{10}},\frac{70}{2^{10}},\ldots,\frac{610}{2^{10}}\}$ \\ \hline
\end{tabular}
}
\label{tab:TS}
\end{table}

\begin{table}[!h]
\caption{The number of \textit{mix-split} steps \& \textit{waste} droplets}
\centering 
{\fontsize{8pt}{6pt}\selectfont \centering
\begin{tabular}{|c|c|c|c|c|c|c|c|c|}
\hline
$TS$ & \multicolumn{2}{|c|}{Proposed} & \multicolumn{2}{|c|}{Mitra et al.} & \multicolumn{2}{|c|}{Hsieh et al.} & \multicolumn{2}{|c|}{Huang et al.}\\
\hline 
$n=10$&$\mathcal{M}$&$\mathcal{W}$&$\mathcal{M}$&$\mathcal{W}$&$\mathcal{M}$&$\mathcal{W}$&$\mathcal{M}$&$\mathcal{W}$\\ \hline \hline
$TS_1$&24&14 (0)&33&28&37&28&32&12\\
$TS_2$&40&13 (2)&64&54&55&36&64&22\\
$TS_3$&57&12 (0)&88&71&82&50&104&39\\
$TS_4$&65&10 (2)&108&88&95&58&126&49\\
\hline
\end{tabular}
}
\label{tab:mix-split-and-waste}
\end{table}

\begin{figure}[!h]
\centering
\begin{subfigure}{.25\textwidth}
  \centering
  \includegraphics[scale=.37]{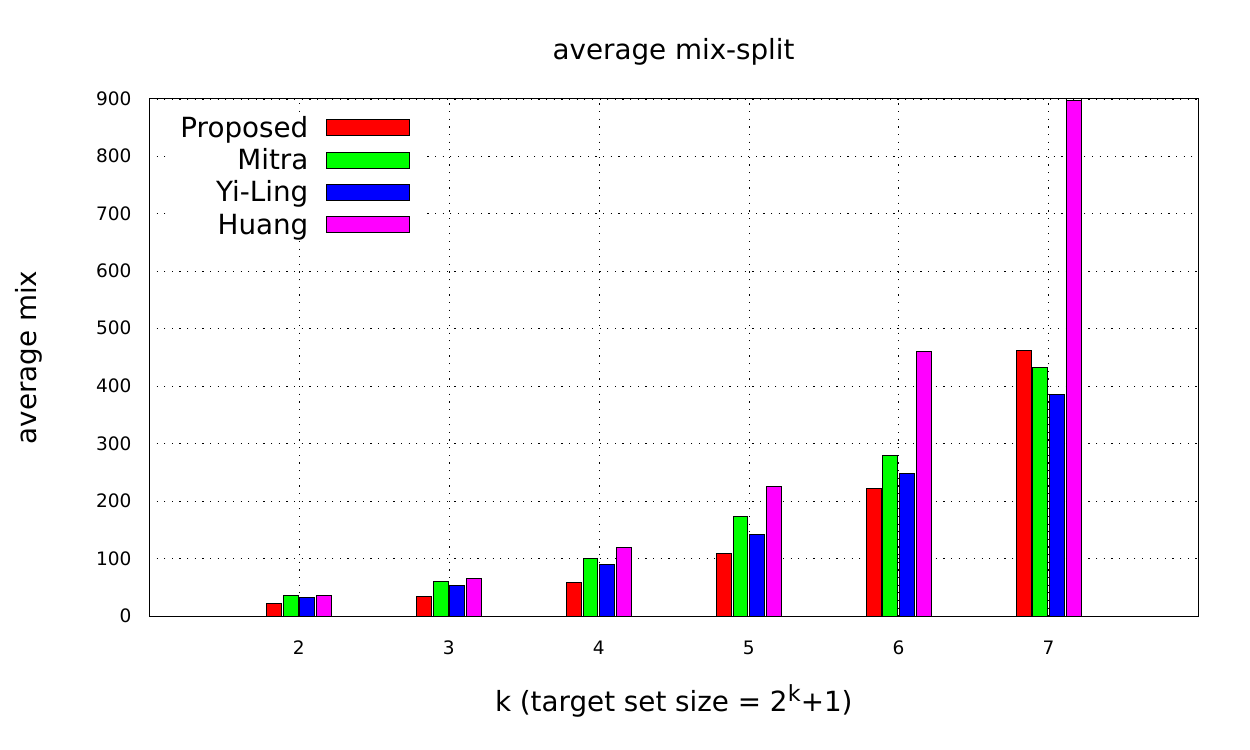}
  \caption{\textit{mix-split}}
  \label{fig:sub1}
\end{subfigure}%
\begin{subfigure}{.25\textwidth}
  \centering
  \includegraphics[scale=.37]{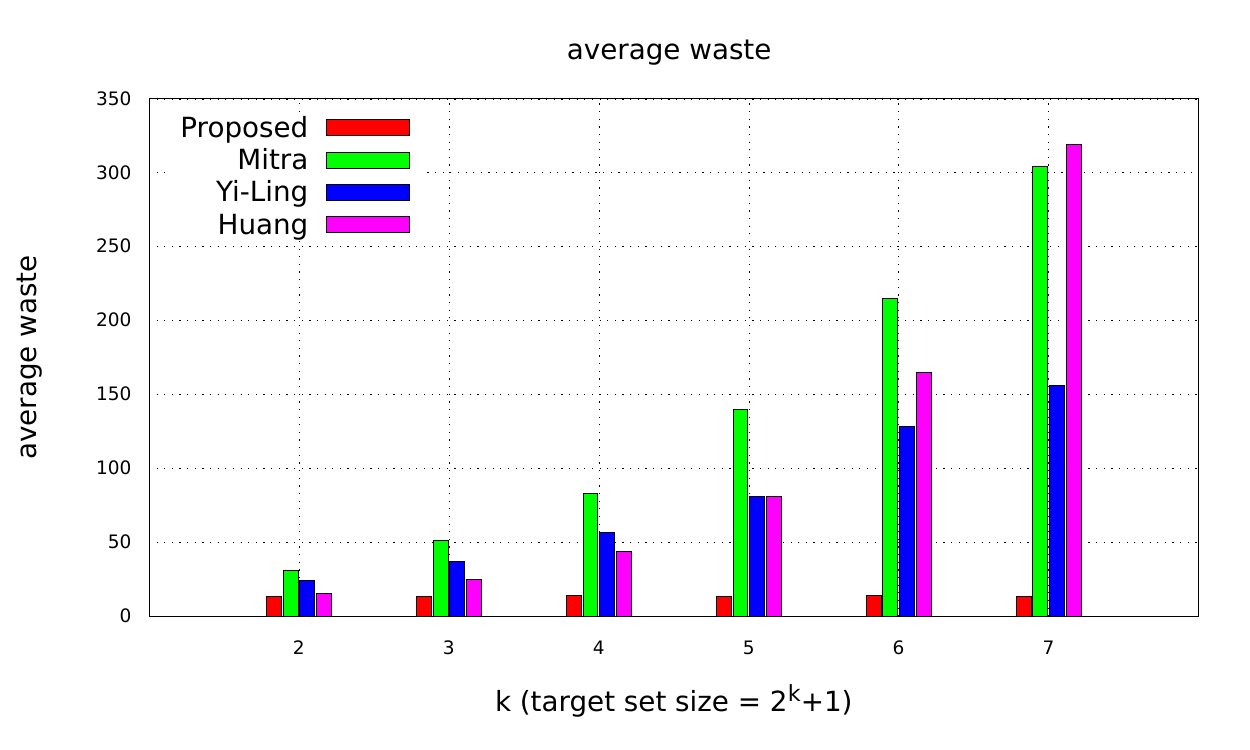}
  \caption{\textit{waste}}
  \label{fig:sub2}
\end{subfigure}
\caption{Comparison of the proposed method with Mitra et al. \cite{b22}, Hsieh et al. \cite{TsungTCAD12}, and Huang et al. \cite{REMIA}}
\label{fig:test}
\end{figure}

In our experiments, we have considered 6 different linear dilution gradient sets $\mathcal{L}$ of size $2^k+1$ for $k=2,\ldots,7$. For each $k$, we have chosen $100$ random sets in the range of $\frac{1}{1024}$ and $\frac{1023}{1024}$, assuming $n=10$. Hence the error in concentration factors will be at most $\frac{1}{2048}$. Assuming $100\%$ sample and $0\%$ buffer as the two boundary $\mathcal{CF}$s, we have counted the total number of $(1:1)$-\textit{mix-split} steps and \textit{waste} droplets considering both the dilution engines and the gradient generator.  Comparative results with respect to earlier methods are shown as histograms in Fig. \ref{fig:test}, where the horizontal axis indicates the size of the target set ($|\mathcal{L}| = 2^k+1$) for $k=2,\ldots,7$, and the vertical axis represents the average number of \textit{mix-split} steps and \textit{waste} droplets required in these methods. Note that the most of the \textit{waste} droplets that are generated in our method correspond to those produced by the dilution engines. We observe that our method produces a significantly fewer number of \textit{waste} droplets compared to all the three earlier methods. Further, the proposed method performs better in terms of the number of \textit{mix-split} steps up to a target set of size $65$, i.e., up to $k=6$.

\section{Conclusions}

We have presented an algorithm for generating linear dilution gradients on a digital microfluidic platform. When the boundary concentration factors of a gradient of size $|\mathcal{L}| = 2^k+1$ are supplied, our method produces the rest without generating any waste droplet, thereby saving costly stock solutions. For other gradient sizes, it produces only a few waste droplets. We have also designed a suitable layout architecture to implement the generator on-chip. Our method is adaptive to the size of dilution gradient as well to the desired accuracy of concentration factor. Thus, the proposed approach will provide a flexible and programmable environment for catering to any need of arbitrary linear gradient during sample preparation. Generation of other dilution gradients such as parabolic or sinusoidal with a digital microfluidic biochip may be studied as a future problem.

\begin{spacing}{0.8}
\bibliographystyle{IEEEtran}

\end{spacing}

\end{document}